\theoremstyle{plain}
\newtheorem{theorem}{Theorem}[section]
\theoremstyle{definition}
\newtheorem{definition}{Definition}[section]
\newcommand{\comment}[1]{}
\newcommand{\marisa}[1]{\textcolor{red}{$\langle${\slshape{\bfseries Marisa:} #1}$\rangle$}}
\journal{Neurocomputing}
\begin{document}
\begin{frontmatter}

%\begin{document}
\title{Parameter identifiability and identifiable combinations in \\generalized Hodgkin-Huxley models}
%\author{Olivia J. Walch$^*$ and Marisa C. Eisenberg$^{*\dagger}$
%\blfootnote{Email addresses: ojwalch@umich.edu, marisae@umich.edu. $^*$Department of Mathematics, University of Michigan, Ann Arbor. $^\dagger$Department of Epidemiology, School of Public Health, University of Michigan, Ann Arbor.}
%}
\author[UMmath]{Olivia J. Walch}
\ead{ojwalch@umich.edu}
\author[UMmath,UMepi]{Marisa C. Eisenberg}
\ead{marisae@umich.edu}
\address[UMmath]{Department of Mathematics, University of Michigan, Ann Arbor}
\address[UMepi]{Department of Epidemiology, School of Public Health, University of Michigan, Ann Arbor}
%\maketitle

\begin{abstract} The use of Hodgkin-Huxley (HH) equations abounds in the literature, but the identifiability of the HH model parameters has not been broadly considered. Identifiability analysis addresses the question of whether it is possible to estimate the model parameters for a given choice of measurement data and experimental inputs. Here we explore the structural identifiability properties of a generalized form of HH from voltage clamp data. Through a scaling argument, we conclude that the steady-state gating variables are not identifiable from voltage clamp data, and then further show that their product together with the conductance term forms an identifiable combination. We additionally show that these parameters become identifiable when the initial conditions for each of the gating variables are known. The time constants for each gating variable are shown to be identifiable, and a novel method for estimating them is presented. Finally, the exponents of the gating variables are shown to be identifiable in the two-gate case, and we conjecture these to be identifiable in the general case. These results are broadly applicable to models using HH-like formalisms, and show in general which parameters and combinations of parameters are possible to estimate from voltage clamp data. 
\end{abstract}

\begin{keyword} identifiability \sep parameter estimation \sep Hodgkin-Huxley models \sep voltage clamp
\end{keyword}

\end{frontmatter}

%Misc things to cite
%Biological mechanism and identifiability of a class of stationary conductance model for Voltage-gated Ion channels
%Febe Francis, M�riam R. Garc�a, Oliver Mason, Richard H. Middleton
%
%Markov models for ion channels: versatility versus identifiability and speed
%Martin Fink , Denis Noble
%
%stuff from: https://scholar.google.com/scholar?client=safari&rls=en&oe=UTF-8&um=1&ie=UTF-8&lr&cites=311005647246782393

\section{Introduction}

Since its introduction in 1952, the Hodgkin and Huxley (HH) model for membrane excitability in the squid giant axon has become one of the most commonly used formalisms in mathematical neuroscience, with citations now numbering in the tens of thousands \cite{HH1952, HH1952b}. By partitioning membrane voltage change into currents caused by the flow of distinct ions, Hodgkin and Huxley created an illuminating characterization of the underlying cause of axon potentials. In the model, each ionic current is gated by channels, and the probability of the channels being open or closed is voltage-dependent. The original model assumes these gates operate independently, and, while subsequent work has shown this not to be the case, HH nonetheless provides a good description of ionic behavior at the appropriate scale and remains highly relevant in the literature today. Consequently, much work has been dedicated to parameter estimation for the HH equations \cite{Buhry2011,Lee2006,Csercsik2010,Willms1999,Fink2009,Willms2002,Hafner1981,Vavoulis2012}.

Most treatments of HH parameter estimation have tackled the problem with a focus on practicality---estimating parameters given noisy and limited data. However, there has been relatively little examination \cite{Csercsik2010} of the more basic but essential question of structural identifiability: given perfect, noise-free data, can the parameters in the model be uniquely determined? While such perfect data is of course unrealistic, structural identifiability is a prerequisite for practical identifiability and successful parameter estimation. Furthermore, such structural identifiability information can be used to generate insights into ways to reduce the model to improve identifiability, or to guide collection of new data that will allow all parameters to be estimated. Thus, understanding the structural identifiability properties of the HH model provides an important foundation in efforts to connect HH-based models with data. 

Here, we examine the identifiability of a broad class of generalized HH-type models. We elucidate the identifiable combination structure for this class of models, evaluate the role of initial conditions in identifiability, and consider what additional data is needed to ensure identifiability. Additionally, we show that the proof of identifiability of the time-constants for the gating variables allows us to develop a novel practical estimation approach for general HH-type models.

\section{Methods}

\subsection{Generalized Hodgkin-Huxley Models}
The HH equations for ionic current can be generalized for $p_n$ ion channel gates of type $n$ acting independently as
\begin{equation}
I(t) = g (V-E) m_1^{p_1}  \cdots m_n^{p_n} 
\label{eq:current}
\end{equation}

\noindent where $g$ is the conductance associated with the ion channel, $V$ is the voltage of the cell, $E$ is the reversal potential of the ion, and the $m_i$ terms represent the probability of a voltage-controlled gate being open. Each of the $m_i$ is further taken to satisfy the differential equation
\begin{equation}
\frac{dm_i}{dt} = \frac{m_{i,\infty}(V)-m_i}{\tau_i(V)}
\label{eq:gates}
\end{equation}

\noindent in which $m_{i,\infty}(V)$ is the steady-state probability of the gate being open when the voltage is held at $V$ and $\tau_i(V)$ is the time-constant for the kinetics of the gate's activation or inactivation at that same voltage. In cases similar to the classical HH model, where only two types of gate appear, the conventional notation $m^{p_1}h^{p_2}$ may be used instead of $m_1^{p_1}m_2^{p_2}$.

While the HH model represents a heavily approximated version of ionic channel dynamics (assuming all ion channels are independent, ignoring changes in reversal potential due to ion flow), its ability to reproduce action potentials and other properties of cell electrophysiology have led to it remaining highly relevant over the six decades since its publication.
 
Typically, the voltage-dependent parameters, $m_{i,\infty}$ and $\tau_i$, are estimated from voltage clamp experiments. In a voltage clamp, a feedback loop is used to hold voltage at a constant value, and the current required to maintain this constant voltage (theoretically, exactly cancelling the ionic currents) is recorded. Individual currents are isolated, either by blocking all other ionic currents, or by subtracting traces where the current in question is blocked from those where nothing is blocked. Once found, the values for $m_{i,\infty}$ and $\tau_i$ across all the fixed voltages are considered together and fit so that the two parameters are then described by functions of voltage, $m_{i,\infty}(V)$ and $\tau_i(V)$. These functions often follow standard forms, e.g. Boltzmann equations, although these are not necessarily completely physically accurate \cite{Willms1999}. 

Much other work has concerned the process of parameter estimation for HH-type models \cite{Buhry2011}, but only a few sources have addressed the issue of identifiability. In \cite{Csercsik2010}, the identifiability of the parameters $\{g,  m_{1,\infty}, m_{2,\infty}\}$ was evaluated in currents of the form $I(t) = g \, m_1^{p_1} m_2^{p_2}(V-E)$. Csercsik and colleagues show that these parameters are unidentifiable, and moreover, no pair of them is identifiable, although the precise form of any identifiable combinations is not determined. Here, we repeat and extend that analysis in the generalized case for an arbitrary number of gates, using a scaling argument, and then additionally show that the time constants $\tau_i$ are identifiable. We also examine the identifiability of powers $p_i$ in the `two independent gates'-type scenario, and evaluate how knowledge of the initial conditions of the gating variables alters the identifiability structure of the model.

\subsection{Identifiability and differential algebra}

Identifiability addresses the question of whether the a given set of parameters can be uniquely estimated for a given model and data. Structural identifiability addresses this question in the case where we assume `perfect,' noise-free data (i.e. complete knowledge of the measured variables for all time points). While this represents an unrealistic best-case scenario, it forms a necessary condition for estimation from real, noisy data, and indeed structural unidentifiability is quite common among mechanistic models \cite{Audoly2001, Meshkat2009, Eisenberg2013}. The importance of identifiability and its place as a necessary precursor to fitting data are discussed further in \cite{Csercsik2010, Eisenberg2013, miao2011}.

Methods for determining structural identifiability have been developed in detail elsewhere \cite{Audoly2001, Meshkat2009, Eisenberg2013, Bellu2007, Bellman1970, Chapman2009, Cobelli1980}, so we provide only brief overview here. Consider a model of the form:

\[ \textbf{x}'(t,\textbf{p})=f(\textbf{x}(t,\textbf{p}),\textbf{u}(t),t; \textbf{p}),\]
\[\textbf{y}(t,\textbf{p})=g(\textbf{x}(t,\textbf{p});\textbf{p}),\]

\noindent where $\textbf{p}$ represents the (vector of) parameters, $\textbf{x}$ is the unobserved state variable vector, $\textbf{u}(t)$ are the known experimental input(s) into the system, if any, and $\textbf{y}(t)$ represents the observed (measured) output (s). We also let $\bf x_0$ represent the vector of initial conditions for $\bf x (t)$. A model is said to be \emph{identifiable} if $\textbf{p}$ can be recovered uniquely from $\textbf{y}$ and  $\textbf{u}$. Because there may be particular or degenerate parameter values and initial conditions for which an otherwise identifiable model is unidentifiable (e.g. initial conditions starting at a constant steady state), structural identifiability is often defined for almost all parameter values and initial conditions \cite{Meshkat2009, Audoly2001, Saccomani2003}:

\begin{definition} For a given ODE model $\dot{\textbf{x}} = f(\textbf{x},t,\textbf{u,p})$ and output  $\textbf{y}$, an individual parameter $p$ is \emph{uniquely (globally) structurally identifiable} if for almost every value $\textbf{p}^*$ and almost all initial conditions, the equation $\textbf{y}(\textbf{x},t,\textbf{p}^*) = \textbf{y}(\textbf{x},t,\textbf{p})$ implies $p = p^*$.  A parameter $p$ is said to be \emph{non-uniquely (locally) structurally identifiable} if for almost any $\textbf{p}^*$ and almost all initial conditions, the equation $\textbf{y}(\textbf{x},t,\textbf{p}^*) = \textbf{y}(\textbf{x},t,\textbf{p})$ implies that $p$ has a finite number of solutions.  \end{definition}

\begin{definition}  Similarly, a model $\dot{\textbf{x}} = f(\textbf{x},t,\textbf{u,p})$ is said to be \emph{uniquely} (respectively \emph{non-uniquely}) \emph{structurally identifiable}  for a given choice of output $\textbf{y}$ if every parameter is uniquely (respectively non-uniquely) structurally identifiable, i.e. the equation $\textbf{y}(\textbf{x},t,\textbf{p}^*) = \textbf{y}(\textbf{x},t,\textbf{p})$ has only one solution, $\bf p = p^*$ (respectively finitely many solutions).   \end{definition}

There are a number of approaches to determining identifiability; here, we use the differential algebra approach \cite{Ollivier1990, Audoly2001, Eisenberg2013} which is briefly summarized as follows. For models with $f$ and $g$ rational, construct the $\textit{input-output}$ equations from the state variable equations and the output equations. Input-output 
equations are monic differential polynomials in the input and output variables and their derivatives with rational coefficients in the parameter vector $\textbf{p}$ (i.e. with the state variables $\textbf{x}$ and all of their derivatives eliminated from the equations). These can be generated in many ways, including using Ritt's pseudo division or Groebner bases, among others \cite{Ollivier1990, Audoly2001,meshkat2012alternative,Eisenberg2013, bearup2013input, Cobelli1980}.
The coefficients (rational in $\textbf{p}$) of the input-output equations are identifiable, and the structural identifiability of the model (i.e. injectivity of the map from parameters to output), can then be tested simply by checking injectivity of the map from the parameters to the coefficients.

As a simple example for illustrative purposes, we consider the HH model given in Eqs. \eqref{eq:current} and \eqref{eq:gates} in the minimal case where $n=p_1=1$. Then solving for $m_1$ from Eq. \eqref{eq:current} yields:
$$m_1 = \frac{I(t)}{g\,(V-E)}.$$
Plugging this into Eq. \eqref{eq:gates} yields
$$\frac{\dot{I}(t)}{g\,(V-E)} = \frac{m_{1,\infty} - \frac{I(t)}{g\,(V-E)}}{\tau_1}.$$
To make this equation monic, we simply clear the coefficient for $\dot{I}$, yielding our input-output equation:
$$0 = \dot{I}(t) - \frac{g\,m_{1,\infty}}{\tau_1}(V-E) + \frac{1}{\tau_1} I(t).$$
The coefficients of the input-output equation are identifiable, so that we see that $\tau_i$ is an identifiable parameter, while $g\,m_{1,\infty}$ forms an identifiable combination with neither parameter identifiable individually.

\section{Results and Discussion}
\subsection{Generalized Hodgkin-Huxley equation identifiability}
As stated above, we consider the identifiability of a generalized form of Hodgkin-Huxley equations, given in Eqs. \eqref{eq:current} and \eqref{eq:gates}.
Unless otherwise stated, we assume we are fitting a single voltage clamp trace and therefore that $V$ is fixed and known. Our output is thus given by $y = I(t)$. Voltage steps in clamp experiments typically are preceded by a period of time in which the voltage is held fixed at a holding potential, $V_{hold}$, consistent across all trials; when this value is used, it will always be distinguished from the step value $V$. Typically, the reversal potential $E_{ion}$ is readily determined through experimental means \cite{HH1952,HH1952b} while the other parameters (the $m_i$, $\tau_i$, and $g$) are estimated from the data. 

\subsubsection{$m_{\infty}$ combinations and non-identifiability}
The authors of \cite{Csercsik2010} show that for the two-gate HH model, $I(t) = g \, m^{p_1} h^{p_2}(V-E)$, no pair from $\{ g, m_{\infty}, h_{\infty}\}$ is identifiable. It is possible to show the same is true in the generalized case with a simple scaling argument, much like what appears as an example in \cite{Eisenberg2013}. 

\begin{theorem}
The conductance term $g$ and the steady-state parameters $m_{i,\infty}$ are not identifiable from voltage clamp data. Nor is the product of any strict subset of $g$ and the $m_{i,\infty}$; however, the product $g \prod_{i=1}^{n} m_{i,\infty}^{p_i}$ is an identifiable combination. %\marisa{should we explicitly note in the theorem that no sub-product would be identifiable either? (the analog of their 'no pair is identifiable' thing)}
\end{theorem}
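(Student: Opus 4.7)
The plan is to use a scaling argument on the gating variables and their steady states, in the spirit of the example in \cite{Eisenberg2013}. First I would introduce arbitrary nonzero scaling constants $c_1,\ldots,c_n$ and define rescaled quantities $\tilde m_i(t) = c_i\,m_i(t)$, $\tilde m_{i,\infty} = c_i\,m_{i,\infty}$, and $\tilde g = g\big/\prod_{i=1}^n c_i^{p_i}$, while leaving $\tau_i$, $V$, $E$, and the exponents $p_i$ fixed. A direct substitution into \eqref{eq:gates} shows that $\tilde m_i$ satisfies the same gating equation with $\tilde m_{i,\infty}$ in place of $m_{i,\infty}$, and substitution into \eqref{eq:current} shows that the observed current $I(t)$ is unchanged. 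This produces an $n$-parameter family of parameter vectors that are indistinguishable from voltage clamp data, establishing that neither $g$ nor any $m_{i,\infty}$ is individually identifiable.

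Next I would use this symmetry to pin down exactly which combinations survive. Applying the scaling to a monomial $g^{\alpha_0}\prod_i m_{i,\infty}^{\alpha_i}$ multiplies it by $\prod_i c_i^{\alpha_i - \alpha_0 p_i}$, so invariance forces $\alpha_i = \alpha_0 p_i$ for every $i$; the invariants are therefore exactly (rational) powers of $g\prod_{i=1}^n m_{i,\infty}^{p_i}$. For a strict subset $S\subsetneq\{0,1,\ldots,n\}$ (with index $0$ corresponding to $g$), any monomial supported on $S$ satisfies $\alpha_j = 0$ for some $j\notin S$; the invariance condition $\alpha_j = \alpha_0 p_j$ with $p_j>0$ then forces $\alpha_0 = 0$ and hence the whole monomial is trivial. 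Thus no non-trivial strict-subset product is invariant, and so none is identifiable.

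Finally, to confirm that $g\prod m_{i,\infty}^{p_i}$ is truly identifiable and not merely the unique invariant candidate, I would exploit the long-time behavior under a constant voltage clamp: each $m_i(t)\to m_{i,\infty}$ as $t\to\infty$, so the asymptote $I_\infty = \lim_{t\to\infty} I(t) = g(V-E)\prod m_{i,\infty}^{p_i}$ is observable, and with $V$ and $E$ known from the experimental protocol, the combination is recovered as $I_\infty/(V-E)$. The hard part is really just the bookkeeping for the strict-subset case---choosing an appropriate index $j$ at which to take $c_j\ne 1$ in each configuration so that the missing factor is actually detected---whereas the ODE/output invariance and the steady-state identification step are routine verifications.
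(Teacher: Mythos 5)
Your proof is correct and follows essentially the same route as the paper: a scaling of the gating variables (and compensating scaling of $g$) that leaves the gating ODEs and the output $I(t)$ invariant establishes non-identifiability of $g$, the $m_{i,\infty}$, and any strict-subset product, while the steady-state limit $I_\infty/(V-E)$ identifies the full combination $g\prod_{i=1}^n m_{i,\infty}^{p_i}$. Your explicit classification of invariant monomials via the exponent condition $\alpha_i = \alpha_0 p_i$ is a somewhat sharper way of disposing of the strict-subset case than the paper's inspection of the rescaled system, but the underlying argument is the same.
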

\label{th:unid}

\begin{proof} From Eq. \eqref{eq:current} for ionic current,
\[I(t) = g (V-E) m_1^{p_1}  \cdots m_n^{p_n}\]
observe that we can rescale $m_1$ by $g^{\frac{1}{p_1}}$ to get $gm_1^{p_1} = \hat{m}_1^{p_1} $ and $I(t) = \hat{m}_1^{p_1}m_2^{p_2}  \cdots m_n^{p_n} (V-E)$. This new 
expression for the ionic current has the same identifiability and input-output structure as the previous one except the conductance term does not appear; hence $g$ is not an identifiable parameter (as $g$ can take on any value and yield the same output given the same input, by adjusting the value of $m_{1,\infty}$).

The argument for the steady-state parameters proceeds similarly. Assuming no steady-state parameter is exactly zero, rescale $m_i$ for $i = 1, \, \cdots,\, n-1$ by $\frac{1}{m_{i,\infty}}$ so that $z_i = m_i/m_{i,\infty}$. Then 

\[ I = g (V-E) m_n^{p_n} \prod_{i=1}^{n-1} z_i^{p_i} m_{i,\infty}^{p_i}.\]

\noindent Next, rescale $m_n$ so that $z_n = m_n \prod_{i=1}^{n-1} m_{i,\infty}^{p_i}$, so that

\[ I= g  (V-E) z_1 \cdots z_n,\] and 

\[ \frac{d z_i}{dt} = \frac{1 - z_i}{\tau_i}\] for $i = 1, \, \cdots, \, n-1$ while 

\[\frac{dz_n}{dt} = \frac{\prod_{i=1}^{n} m_{i,\infty}^{p_i} - z_n}{\tau_n}.\]

Again the identifiability structure is unchanged, but the steady-state parameters appear only once, grouped into a single term:  $\prod_{i=1}^{n} m_{i,\infty}^{p_i}$. 
The individual steady-state parameters are thus not identifiable, nor is the product of any strict subset of the steady-state parameters and conductance term. Their full product with $g$ is identifiable because \[ \lim_{t \rightarrow \infty} \frac{I(t)}{V-E} = g\prod_{i=1}^{n} m_{i,\infty}^{p_i}\] which, under the assumption of perfect data, is known. \end{proof}
\bigskip

\begin{figure*}
\begin{center}
\includegraphics[width = 0.9\textwidth]{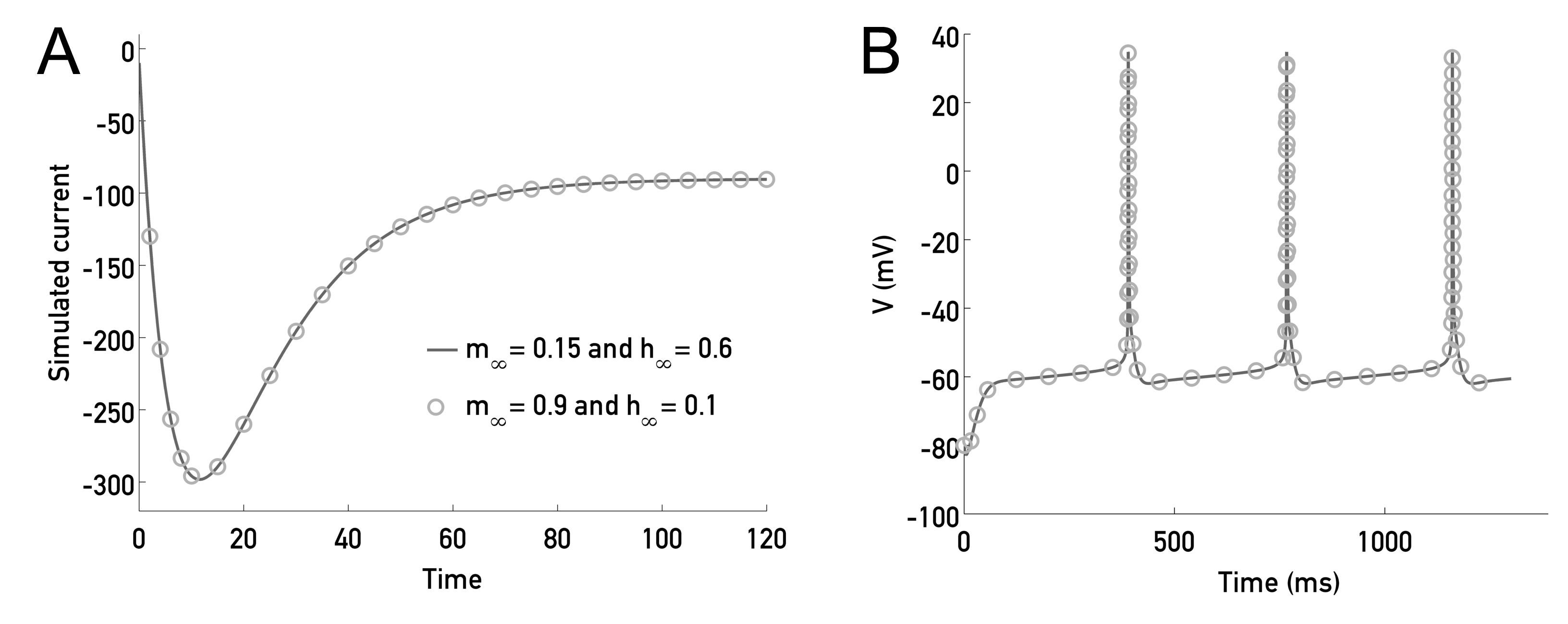}
\caption{A. Two different sets of $(m_0, m_{\infty})$, $(h_0, h_{\infty})$  pairs yield the same current trace in a simulated sodium current $I_{Na} = m^3 h g (V - E)$.  B. With appropriate scaling of $m_0$, $m_{\infty}$, $h_0$ and $h_{\infty}$, the sodium current of the Forger-Sim SCN neuron model and output firing of the model as a whole are identical for two different parameter sets (shown as circles and as a solid line).}
\label{fig:unid}
\end{center}
\end{figure*}

To illustrate this issue, Figure \ref{fig:unid}A shows two simulated sodium-type current traces, with gating variables of the form $m^3h$, which are identical despite different $m_{\infty}$ and $h_{\infty}$ values; a similar example is included in \cite{Csercsik2010}.

\subsubsection{$\tau$ identifiability}
While the undentifiability of the steady-state parameters can be ascertained through scaling, to show the identifiability of the time constants using results from differential algebra requires a slightly more technical analysis. %\mstrike{mathematical trickery.} \marisa{sorry! you're still the best, mathematical trickery}

\begin{theorem}
The time constants for the gating variable kinetics, $\tau_i$, are identifiable from voltage clamp data. 
\end{theorem}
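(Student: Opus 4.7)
The plan is to exploit the fact that under voltage clamp $V$ is held fixed, so the gating equations \eqref{eq:gates} become linear first-order ODEs with constant coefficients and $I(t)$ is a finite sum of exponentials whose rates directly encode the $\tau_i$. Identifiability of those rates follows from the differential algebra framework of Section~2.2, and an elementary combinatorial argument then extracts the individual $\tau_i$.

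First, I would integrate \eqref{eq:gates} for fixed $V$ to obtain $m_i(t) = m_{i,\infty} + (m_{i,0}-m_{i,\infty})e^{-t/\tau_i}$, substitute into \eqref{eq:current}, and expand each $m_i^{p_i}$ by the binomial theorem to get
\[
I(t) \;=\; \sum_{\mathbf{k}} C_{\mathbf{k}}\, e^{-\lambda_{\mathbf{k}} t}, \qquad \lambda_{\mathbf{k}} \;=\; \sum_{i=1}^{n} \frac{k_i}{\tau_i},
\]
with $\mathbf{k}$ ranging over integer tuples $0 \le k_i \le p_i$ and the $C_{\mathbf{k}}$ built from $g$, $V-E$, $m_{i,0}$ and $m_{i,\infty}$. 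Equivalently $I(t)$ is annihilated by a linear ODE with constant coefficients whose characteristic polynomial is $\prod_{\mathbf{k}}(s+\lambda_{\mathbf{k}})$; this is the (homogenization of the) input-output equation obtained by eliminating the $m_i$ via differential algebra, so its coefficients -- and therefore the multiset $\Lambda = \{\lambda_{\mathbf{k}}\}$ of characteristic roots -- are identifiable from $I(t)$.

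Next, to recover the $1/\tau_i$ from $\Lambda$, let $e_i$ denote the $i$-th standard basis vector in $\mathbb{Z}^n$ and call $\lambda>0$ \emph{indecomposable} in $\Lambda$ if it cannot be written as the sum of two strictly positive elements of $\Lambda$. The value $\lambda_{e_i} = 1/\tau_i$ is indecomposable, because $e_i$ admits no splitting into two nonzero nonnegative integer tuples; conversely any $\mathbf{k}$ with $\sum_i k_i \ge 2$ is decomposable via $\mathbf{k} = e_{i_0} + (\mathbf{k}-e_{i_0})$ for any $i_0$ with $k_{i_0}\ge 1$. Provided the map $\mathbf{k}\mapsto\lambda_{\mathbf{k}}$ is injective -- which holds off a measure-zero set of parameters -- the indecomposable elements of $\Lambda$ are exactly $\{1/\tau_1,\ldots,1/\tau_n\}$, which recovers the $\tau_i$ as an unordered set (the most one can hope for given the symmetry of \eqref{eq:current} in its gating factors).

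The main obstacle is this last extraction: identifiability of the characteristic polynomial falls out of the differential algebra machinery, but turning it into identifiability of each $\tau_i$ rests on injectivity of $\mathbf{k}\mapsto\lambda_{\mathbf{k}}$. Relations such as $\tau_i=\tau_j$ or $3/\tau_1=2/\tau_2$ collapse elements of $\Lambda$ and defeat the indecomposability argument, but they confine the parameters to a measure-zero subvariety -- exactly the level of genericity at which structural identifiability is defined.
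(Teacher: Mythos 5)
Your proof is correct, but it takes a genuinely different route from the paper's. The paper never writes down the explicit exponential solution: it rescales to $z_i = m_i/m_{i,\infty}$, expands $\hat{I}=\prod_k(1-\tau_k\dot{z}_k)$ over subsets, differentiates repeatedly using $\ddot{z}_i=-\dot{z}_i/\tau_i$ to obtain a Vandermonde system in the mode products $\dot{z}_{j_1}\cdots\dot{z}_{j_k}$, inverts it, and manufactures a genuine (nonlinear) input-output equation from the algebraic relation $\prod_k x_k = x_{2^n-1}$; the $\tau_i$ are then read off as roots of the polynomial formed by the singleton coefficients. You instead integrate the gating equations, observe that $I(t)$ is a finite exponential sum with rates $\lambda_{\mathbf{k}}=\sum_i k_i/\tau_i$, recover the rate set $\Lambda$, and extract the $1/\tau_i$ as the indecomposable positive elements of $\Lambda$ --- a clean combinatorial step that the paper leaves implicit (its recovered root set also mixes the atoms $1/\tau_i$ with their sums, and it never says how to tell them apart). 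Your version also handles arbitrary powers $p_i$ uniformly, where the paper must collapse repeated Vandermonde columns by hand. One loose joint: the constant-coefficient annihilator $\prod_{\mathbf{k}}(s+\lambda_{\mathbf{k}})$ is not the characteristic-set input-output equation of this nonlinear system (the paper's is nonlinear in $\hat{I}$ and its derivatives), and coefficients of an arbitrary differential polynomial satisfied by the output are not automatically identifiable. But you do not need that machinery: if two parameter sets produce the same output, linear independence of the functions $e^{-\lambda t}$ forces the two sets of exponents appearing with nonzero coefficients to coincide, and generically all $C_{\mathbf{k}}\neq 0$ (requiring only $g\neq 0$, $V\neq E$, $m_{i,\infty}\neq 0$, $m_{i,0}\neq m_{i,\infty}$, i.e.\ almost all parameters and initial conditions), so $\Lambda$ is identifiable by an entirely elementary argument. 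What your route gives up is the explicit polynomial input-output equation, which the paper reuses as the basis of its nullspace estimation scheme; what it buys is a shorter, more transparent proof whose genericity conditions (distinctness of the $\lambda_{\mathbf{k}}$ and absence of relations $1/\tau_i=\lambda_{\mathbf{k}}+\lambda_{\mathbf{k}'}$) are stated explicitly rather than hidden in the invertibility of the Vandermonde matrix.
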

\label{th:tau}

\begin{proof}We start by considering the case where $p_i = 1$ for all $i$. Rescale the current trace by the steady state value $I_{\infty} = g (V-E)  \prod_{i=1}^{n} m_{i,\infty}$. This value, while possibly very small, is non-zero. Next, let $m_i/m_{i,\infty} = z_i$ as previously and denote the rescaled current by $\hat{I}(t) = I(t)/I_{\infty}$. Make the substitution $1 - \tau_i\dot{z_i} = 
z_i$ to rewrite this current as 

\[\hat{I}(t) =  \prod_{k=1}^n (1 - \tau_k\dot{z_k}).\]

\noindent Expanding this product gives: 

\begin{align*}
\hat{I}(t) \, =  \, &1 - \sum_{i}  \tau_i\dot{z_i} + \sum_{1 \leq i<j \leq n} \tau_i \tau_j \dot{z_i} \dot{z_j} \,  + \, \cdots \\ 
&   +  \, \sum_{1 \leq j_1<\cdots<j_k \leq n} (-1)^k \tau_{j_1} \cdots \tau_{j_k} \dot{z_{j_1}} \cdots \dot{z_{j_k}} + \,  \cdots \,  \\
&  + \, (-1)^n \tau_{1} \cdots \tau_{n} \dot{z_1} \cdots \dot{z_n}  
\end{align*}

\medskip
\noindent Since $\ddot{z_i} = -\dot{z_i}/\tau_i$, we can write down successive derivatives of $I(t)$ as:

{\small
\begin{align*}
(-1)^{\ell+1}  \hat{I}^{(\ell)}(t) \, = \, &  - \sum_{i} \left( \frac{1}{\tau_i}\right)^{\ell}  \tau_i\dot{z_i} \, + \sum_{1 \leq i<j \leq n}  \left( \frac{1}{\tau_i} +  \frac{1}
{\tau_j}\right)^{\ell} \tau_i \tau_j \dot{z_i} \dot{z_j} + \, \cdots \, \\
& +  \sum_{1 \leq j_1<\cdots<j_k \leq n} (-1)^k \left( \frac{1}{\tau_{j_1}} + \cdots + \frac{1}{\tau_{j_k}}\right)^{\ell}  \tau_{j_1} \cdots \tau_{j_k} \dot{z_{j_1}} \cdots \dot{z_{j_k}}\\
& + \cdots
 + \, \,   (-1)^n \left( \frac{1}{\tau_{1}} + \cdots + \frac{1}{\tau_{n}}\right)^{\ell}\tau_{1} \cdots \tau_{n} \dot{z_1} \cdots \dot{z_n} 
\end{align*}
}

\noindent From the first $2^n -1$ derivatives of $\hat{I}(t)$, we can write down the system:
{\small
\[\begin{pmatrix}
1 & 1 & \cdots & 1 \\
\lambda_1 & \lambda_2 & \cdots & \lambda_{2^n-1} \\
\lambda_1^2 & \lambda_2^2 & \cdots & \lambda^2_{2^n-1} \\
\vdots & \vdots & \vdots & \vdots \\
\lambda_1^{2^n - 2} & \lambda_2^{2^n -2} & \cdots & \lambda^{2^n-2}_{2^n-1} \\
\end{pmatrix}
\begin{pmatrix}
- \tau_1 \dot{z_1} \\
\vdots \\
(-1)^{k} \tau_{j_1} \cdots \tau_{j_k} \dot{z_{j_1}} \cdots \dot{z_{j_k}}\\
\vdots \\
(-1)^{n} \tau_{1} \cdots \tau_{n} \dot{z_1} \cdots \dot{z_n} 
\end{pmatrix}
= \begin{pmatrix}
\tilde{I} \\
\hat{I}'(t) \\
\vdots \\
\hat{I}^{(p)}(t) \\
\vdots \\
\hat{I}^{(2^n-1)}(t) \\
\end{pmatrix}
\]
}
where $\tilde{I} = \hat{I}(t) -1$ and the $\lambda_{i}$'s represent a sequential indexing of the possible $\sum -\frac{1}{\tau}$ quantities with $\lambda_i = -\frac{1}
{\tau_i}$ for $i = 1 \cdots n$ and \[ \lambda_{2^n - 1} =-\sum_i^n  \frac{1}{\tau_i}. \] Denote this Vandermonde matrix by $V$ and this system as $Vx = y$, with

\[x = \begin{pmatrix}
- \tau_1 \dot{z_1} \\
\vdots \\
(-1)^{k} \tau_{j_1} \cdots \tau_{j_k} \dot{z_{j_1}} \cdots \dot{z_{j_k}}\\
\vdots \\
(-1)^{n} \tau_{1} \cdots \tau_{n} \dot{z_1} \cdots \dot{z_n} 
\end{pmatrix}\]

\noindent The `$\lambda$'-Vandermonde matrix is invertible as long as $\lambda_i \neq \lambda_j$, which occurs with probability 1, so $x = V^{-1}y$ (with $i$th entry denoted $x_i
$). We can next observe that

\[ \prod_{k=1}^n x_k = (-1)^n \tau_1 \cdots \tau_n \dot{z}_1 \cdots \dot{z}_n =  x_{2^n-1}.\]

 Since all $x_i$ can be written as a linear combination of the derivatives of $\hat{I}(t)$, the equation $\prod_{k=1}^n x_k  =  x_{2^n-1}$ gives a polynomial in $\hat{I}(t)$ and its 
derivatives with coefficients in the parameters: an input-output equation. 

Furthermore, the coefficients of the monomials of the form $\hat{I}^{(k)}(t)$ (or $\tilde{I}$, ``singletons'') are the entries of the last row of the inverse Vandermonde matrix. Note that 
the entries of the last row are also the coefficients $a_k$ of the polynomial $\sum_{k=0}^{2^n-2} a_k x^k$, the $2^n-2$ roots of which are $\lambda_1 \cdots \lambda_
{2^n-2}$ ($\lambda_{2^n-1}$ can be recovered by summing over all the roots and dividing by $2^n-1$). These roots are invariant under scalings of the coefficients; hence, by 
finding the roots of the polynomial with coefficients taken from these monomials, we can recover the set of $\tau_i$ from them. This implies that 
the $\tau_i$s are identifiable parameters. 

It remains to consider the case where $p_i$ is not necessarily 1. Let $N = \sum p_i$, and note that we can use a $(2^N - 1)$-by-$(2^N - 1)$ Vandermonde matrix in 
writing a linear system $Vx = y$ similar to the one above, with the key difference being that now certain $\lambda$ are equal. $V$ in this case will not be invertible; 
however, by eliminating the repeated columns and collapsing all duplicate entries of $x$ into single entries (e.g. Replacing $x_1 = x_2 = x_3 = \tau_1 \dot{z_1} $ with 
$x_1 = 3 \tau_1 \dot{z_1} $ ), we can rewrite the system so that $V$ is $(2^N - 1)$-by-$p$, $x$ is $p$-by-$1$, and $y$ is $(2^N - 1)$-by-1, where $p$ is a quantity that emerges from the partitioning of $N$ into $p_i$. Removing rows of $V$ until it is square (while preserving the first $n$ and last rows), we can construct an input-output equation by equating entries of $x$ in the same way as before, and the coefficients of singleton monomials will also be the coefficients of a polynomial with 
zeros equal to $\lambda_1, \cdots, \lambda_{p-1}$. \end{proof}

Hence, the time constants are identifiable, even in the generalized case discussed here. This proof can also provide a way to estimate the time constants from experimental data, discussed further below.

\subsubsection{Power identifiability}
\begin{theorem}
For a classical two-gate Hodgkin-Huxley model of the form $I = g (V  - E) m_1^{p_1} m_2^{p_2}$, the powers $p_1$ and $p_2$ are identifiable.
\end{theorem}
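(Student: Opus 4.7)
The plan is to leverage the two preceding theorems: the combination $I_\infty = g(V-E)\, m_{1,\infty}^{p_1}m_{2,\infty}^{p_2}$ is identifiable (together with the experimentally known $V-E$, this follows from the steady-state limit given in the first theorem), and the time constants $\tau_1, \tau_2$ are identifiable. Together these render the normalized current
$$\hat{I}(t) \,=\, \frac{I(t)}{I_\infty} \,=\, \bigl(1 + c_1\, e^{-t/\tau_1}\bigr)^{p_1}\bigl(1 + c_2\, e^{-t/\tau_2}\bigr)^{p_2}$$
identifiable as a function of $t$, where $c_i = (m_{i,0}-m_{i,\infty})/m_{i,\infty}$ arises from the explicit solution of Eq.~\eqref{eq:gates}. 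Since the two gates enter multiplicatively, my first step would be to pass to the logarithm,
$$\log \hat{I}(t) \,=\, p_1 \log\!\bigl(1 + c_1\, e^{-t/\tau_1}\bigr) + p_2 \log\!\bigl(1 + c_2\, e^{-t/\tau_2}\bigr),$$
which is also identifiable and additively separates the two gates.

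Next I would expand each logarithm as a power series, valid once $t$ is large enough that $|c_i e^{-t/\tau_i}| < 1$:
$$\log \hat{I}(t) \,=\, \sum_{k\geq 1} \frac{(-1)^{k+1}}{k}\Bigl(p_1\, c_1^{k}\, e^{-kt/\tau_1} + p_2\, c_2^{k}\, e^{-kt/\tau_2}\Bigr).$$
For almost all $(\tau_1,\tau_2)$---specifically, excluding the measure-zero set on which $\tau_2/\tau_1$ is rational---the rates $\{k/\tau_1,\, \ell/\tau_2 : k,\ell\geq 1\}$ are pairwise distinct, so the two families of exponentials are linearly independent and each coefficient of the expansion is individually identifiable. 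In particular, $A_k := (-1)^{k+1}p_1 c_1^k/k$ (and analogously $B_k$ on the $\tau_2$ side) can be extracted from $\log \hat{I}(t)$.

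The concluding step is a ratio computation. Since $A_1 = p_1 c_1$ and $A_2 = -p_1 c_1^2/2$, the ratio $A_2/A_1^2 = -1/(2\,p_1)$ recovers $p_1$ explicitly, provided $c_1 \neq 0$ (i.e., $m_{1,0}$ is not exactly at its steady-state value, which is an almost-all condition on initial conditions). The analogous ratio on the $\tau_2$ family yields $p_2$.

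The main obstacle I anticipate is making the ``read off the coefficients'' step fully rigorous in the differential-algebra sense used earlier in the paper. Concretely, one needs to argue that these series coefficients are not merely asymptotic limits but honest identifiable combinations of the parameters; this could be accomplished either by invoking linear independence of distinct exponentials directly on the observable output, or by building an explicit Vandermonde-type linear system in the spirit of the preceding $\tau$-identifiability argument and then using the same root-extraction trick used there to isolate the coefficients. A secondary care point is verifying that the exponent families $\{k/\tau_1\}$ and $\{\ell/\tau_2\}$ are disjoint on the generic set, and that $c_1, c_2 \neq 0$---both mild ``almost all'' conditions consistent with the structural identifiability framework already adopted.
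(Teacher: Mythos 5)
Your proof is correct (modulo the ``almost all'' caveats you flag yourself, which are consistent with the paper's definition of structural identifiability), but it takes a genuinely different route from the paper's. The paper also begins by passing to the logarithm, but then stays inside the differential-algebra framework: it forms $f(t) = I'(t)/I(t)$ together with $f'$ and $f''$, rewrites these in terms of $a = m_1'/m_1$ and $b = m_2'/m_2$ using $m_i'' = -m_i'/\tau_i$, and eliminates $a$ and $b$ via a Groebner basis computation in Mathematica; the coefficients of the resulting input-output equation then yield identifiability of $p_1$ and $p_2$. You instead exploit the explicit solution available under voltage clamp, normalize by the identifiable steady-state value, expand $\log\hat{I}(t)$ as a convergent sum of decaying exponentials with rates $k/\tau_1$ and $\ell/\tau_2$, and recover $p_1 = -A_1^2/(2A_2)$ from the first two coefficients of each family. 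The paper's route produces an honest input-output equation but hinges on a symbolic elimination that does not scale (the paper states the $n$-gate Groebner computation is intractable and only conjectures the general result); your route is more elementary, directly establishes injectivity of the parameter-to-output map---which is all the paper's definition requires---and, notably, generalizes immediately to $n$ gates, since for generic $\tau_i$ the rates $k/\tau_i$ remain pairwise distinct and the same ratio trick recovers every $p_i$. The only step worth making fully explicit is the standard peeling argument showing that the coefficients of a convergent sum of distinct decaying exponentials are determined by the function on a half-line (extract the slowest rate and its coefficient by a limit, subtract, and iterate); that closes the gap you identify, and the generic conditions you need ($\tau_2/\tau_1$ irrational, $c_i \neq 0$, $p_i \neq 0$) are exactly the kind of measure-zero exclusions the paper's identifiability definition already permits.
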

This can be shown by considering successive derivatives of $\log(I(t))$, specifically $f(t) = I'(t)/I(t)$, $f'(t)$, and $f''(t)$. Computing $f$ and its derivatives and 
replacing $m_i''(t)$ with $-m_i'(t)/\tau_i$ yields expressions in terms of the parameters $p_i$ and $\tau_i$ and the ratio of state variables $m_i'(t)/m_i(t)$. 
Replacing these ratios with $a = m_1'(t)/m_1(t)$ and $b = m_2'(t)/m_2(t)$ gives the following:
{\small
\[f(t) = a p_1 + b p_2\]
\[f'(t) =  -a^2 p_1 - (a p_1)/\tau_1 -b^2 p_2 - (b p_2)/\tau_2 \]
\[ f''(t) = 2 a^3 p_1 + (a p_1)/\tau_1^2 + (3 a^2 p_1)/\tau_1 +2 b^3 p_2 + (b p_2)/\tau_2^2 + (3 b^2 p_2)/\tau_2 \]
}
Using Mathematica to eliminate the variables $a$ and $b$ through a Groebner basis computation yields an input-output equation, the coefficients of which readily imply the identifiability of $p_1$ and $p_2$. We conjecture that a similar result holds for the generalized case, however the growth of this Groebner basis calculation's complexity with $n$ has made this somewhat intractable so far.

\subsection{Consideration of initial conditions}
Thus far we have assumed no knowledge of the initial conditions of the model (although initial conditions for the output $I(t)$ and its derivatives are assumed known as $I(t)$ is measured perfectly for all times). In this case, the additional information provided by knowledge of the initial conditions changes the identifiability structure of the problem. 

\begin{theorem}
If initial conditions for the gating variables $m_{i,0} := m_i(0)$ are known, the steady state parameters at a fixed voltage, $m_{i,\infty}(V)$, are identifiable from voltage clamp data.
\end{theorem}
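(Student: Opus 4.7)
The plan is to combine Theorem~\ref{th:tau} with the newly assumed knowledge of the $m_{i,0}$, so that every quantity appearing in $I(t)$ except the $m_{i,\infty}$ becomes effectively known, and then to recover each $m_{i,\infty}$ one at a time by matching coefficients in the resulting multi-exponential expansion of the current.

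First I would solve \eqref{eq:gates} explicitly---trivial since $V$ is clamped and so $\tau_i$ and $m_{i,\infty}$ are constants---giving
\[
m_i(t) \;=\; m_{i,\infty} + (m_{i,0} - m_{i,\infty})\, e^{-t/\tau_i}.
\]
With $\tau_i$ identifiable by Theorem~\ref{th:tau} and $m_{i,0}$ given by hypothesis, each factor appearing in \eqref{eq:current} is a known affine function of the lone remaining unknown $m_{i,\infty}$ plus a known exponential. Substituting into \eqref{eq:current} and expanding via the binomial theorem yields
\[
I(t) = g(V-E)\!\!\sum_{0\le k_i \le p_i}\!\Bigl(\prod_{i=1}^n \binom{p_i}{k_i} m_{i,\infty}^{p_i - k_i}(m_{i,0}-m_{i,\infty})^{k_i}\Bigr)\exp\!\Bigl(-t\sum_{i=1}^n \tfrac{k_i}{\tau_i}\Bigr),
\]
a finite linear combination of exponentials whose decay rates $\sum_i k_i/\tau_i$ are pairwise distinct for generic $\tau$.

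Next, I would invoke the standard fact that the coefficients of a sum of exponentials with distinct, known rates are uniquely determined by the function, so each such coefficient is identifiable from the clamp trace. Writing $A_{\mathbf{0}}$ for the constant term and $A_j$ for the coefficient of $e^{-t/\tau_j}$, a direct calculation gives
\[
A_{\mathbf{0}} = g(V-E)\prod_{i=1}^n m_{i,\infty}^{p_i}, \qquad \frac{A_j}{A_{\mathbf{0}}} = \frac{p_j\,(m_{j,0}-m_{j,\infty})}{m_{j,\infty}},
\]
and solving the latter yields $m_{j,\infty} = p_j\, m_{j,0}/(p_j + A_j/A_{\mathbf{0}})$. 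Each steady state is thus recovered individually. As a bonus, $g$ becomes identifiable too, either from $A_{\mathbf{0}}$ or directly from the observed initial value $I(0) = g(V-E)\prod_i m_{i,0}^{p_i}$.

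The step I expect to require the most care is the isolation argument behind reading off the $A_j$: one must ensure that no other multi-index $(k_1,\ldots,k_n)$ with $\sum_i k_i \ge 2$ produces the same decay rate $1/\tau_j$, since such a collision would bundle additional coefficients into $A_j$ and break the ratio computation. This holds for generic time constants and fits cleanly within the ``almost every parameter'' convention already used throughout the paper, but it is worth flagging explicitly. A minor companion assumption is $m_{j,\infty}\neq 0$, matching the non-degeneracy hypothesis already employed in the proof of Theorem~\ref{th:unid}.
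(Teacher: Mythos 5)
Your proposal is correct, but it reaches the conclusion by a more computational route than the paper. The paper's proof is a short reduction: divide $I(t)$ by its (observable) steady-state value $g(V-E)\prod_k m_{k,\infty}^{p_k}$, note that the rescaled model $\hat I(t)=\prod_k \hat m_k^{p_k}$ with $\hat m_k = m_k/m_{k,\infty}$ contains only the $\tau_i$ as parameters and is therefore identifiable, recover the scaled initial conditions $\hat m_{i,0}=m_{i,0}/m_{i,\infty}$ from the explicit solution $\hat m_i(t)=1-(1-\hat m_{i,0})e^{-t/\tau_i}$, and divide the known $m_{i,0}$ by $\hat m_{i,0}$. You instead expand $I(t)$ as a finite sum of exponentials and extract the ratio $A_j/A_{\mathbf 0}=p_j(m_{j,0}/m_{j,\infty}-1)$ by coefficient matching---which is really the same identifiable combination $m_{j,0}/m_{j,\infty}$ in disguise, recovered constructively. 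Your version buys explicit closed-form recovery formulas (and the identifiability of $g$ as a by-product), at the cost of the extra genericity bookkeeping you rightly flag: the decay rates $\sum_i k_i/\tau_i$ must not collide, and one needs $m_{j,0}\neq 0$ (your formula degenerates to $0/0$ there, but so does the paper's division step, and both are covered by the ``almost all initial conditions'' convention). The paper's version is shorter because it delegates the heavy lifting to the already-proved identifiability of the $\tau_i$ and of the scaled model; both arguments are sound.
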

\begin{proof} As before, scale the original current trace $I(t)$ by $g (V-E) \prod_{k=1}^{n} m_{k,\infty}^{p_k} $ to get $\hat{I}(t) = \prod_{k=1}^{n} \hat
{m}_k^{p_k}$. The only parameters in this scaled model are the (identifiable) $\tau_i$ values, so by Theorem 2.2 the model is itself identifiable. We can solve for the 
initial conditions of the scaled model using the explicit solution $\hat{m}_i(t) = 1 - (1 - \hat{m}_{i,0})\exp(-t/\tau_i)$. Once found, the scaled initial conditions 
can be divided into the unscaled initial conditions, yielding the steady state parameters $m_{i,\infty}(V)$. \end{proof}

\subsubsection{Identifiable combinations in terms of initial conditions}
Given the lack of identifiability for HH models unless the gating variable initial conditions are known, a natural question arises in whether consideration of the initial conditions---even when unknown---might yield additional identifiable combinations. Moreover, when practically fitting the model, the initial conditions of the gating variables would need to be considered.

\begin{theorem} The pairs $m_{i,0}/m_{i,\infty}$ are identifiable combinations given voltage clamp data. 
\end{theorem}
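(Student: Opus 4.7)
The plan is to recycle the rescaling from the proof of Theorem 2.4 and then read the ratios $m_{i,0}/m_{i,\infty}$ directly off the initial conditions of the rescaled system. First I would set $\hat{m}_k = m_k/m_{k,\infty}$ and define the rescaled output $\hat{I}(t) = I(t)/\bigl(g(V-E)\prod_{k=1}^{n} m_{k,\infty}^{p_k}\bigr)$. The denominator is the identifiable combination produced by Theorem 2.1, and $I(t)$ is measured, so $\hat{I}(t)$ is itself a known function of time. Under this scaling, $\hat{m}_k$ satisfies $d\hat{m}_k/dt = (1-\hat{m}_k)/\tau_k$ with initial condition $\hat{m}_{k,0} = m_{k,0}/m_{k,\infty}$, and $\hat{I}(t) = \prod_{k=1}^{n} \hat{m}_k(t)^{p_k}$.

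Next I would substitute the explicit solution $\hat{m}_k(t) = 1 - (1-\hat{m}_{k,0})\exp(-t/\tau_k)$ and expand $\hat{I}(t)$ via the binomial theorem as a sum of exponentials of the form $\exp\bigl(-\sum_k j_k t/\tau_k\bigr)$ with $0\le j_k\le p_k$. By Theorem 2.2 the $\tau_k$ are identifiable, and for generic parameter values the rates $\sum_k j_k/\tau_k$ are pairwise distinct, so the coefficient attached to each exponential mode is itself identifiable from $\hat{I}(t)$. In particular, isolating the ``pure'' first-order mode $e^{-t/\tau_i}$, its coefficient equals $-p_i(1-\hat{m}_{i,0})$, which (given $p_i$, either from prior modeling assumption or from Theorem 2.3 and its conjectured extension) determines $\hat{m}_{i,0} = m_{i,0}/m_{i,\infty}$ uniquely.

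The main obstacle I would need to address carefully is the step that asserts that the coefficient of each distinct exponential mode is identifiable from $\hat{I}(t)$. This is a standard fact about linear independence of exponentials with distinct decay rates, but it does rely on generic distinctness of the $\tau_k$ and of all the linear combinations $\sum_k j_k/\tau_k$ that actually appear in the expansion. Any such equality forces an algebraic relation on the $\tau_k$ and is therefore violated only on a measure-zero set of parameter values, which is consistent with the almost-everywhere quantifier in the definition of structural identifiability. Degenerate cases such as $\hat{m}_{i,0} = 1$ (the scaled gating variable already at steady state) are self-consistent since the corresponding factor then equals $1$ identically and the ratio is trivially pinned down. As a sanity check, identifiability of each ratio $m_{i,0}/m_{i,\infty}$ immediately recovers Theorem 2.4: once either endpoint is known, the other is obtained by dividing into the identifiable combination.
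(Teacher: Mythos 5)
Your proposal follows essentially the same route as the paper: substitute the explicit solutions $m_i(t) = m_{i,\infty} - (m_{i,\infty}-m_{i,0})e^{-t/\tau_i}$, normalize by the identifiable steady-state value $g(V-E)\prod_k m_{k,\infty}^{p_k}$, and observe that the resulting expression $\hat{I}(t)=\prod_i\bigl(1-(1-m_{i,0}/m_{i,\infty})e^{-t/\tau_i}\bigr)^{p_i}$ determines the ratios. The paper simply asserts that last step, whereas you justify it by expanding into exponential modes with generically distinct rates and reading off the coefficient $-p_i(1-m_{i,0}/m_{i,\infty})$ of $e^{-t/\tau_i}$ --- a correct and slightly more explicit version of the same argument.
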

\label{th:ICcombos}

\begin{proof}
%Given
 %\[I(t) = g (V-E) \prod_{i = 1}^n m_i^{p_i}\]
%\noindent r
Replacing the $m_i$ in Eq. \eqref{eq:current} with their explicit solutions and factoring yields
%\begin{align*}
%I = & g (V-E) \left(\prod_j m_{j,\infty}^{p_j} \right) \cdot \\
%& \prod_{i = 1}^n (1 - (1 - \frac{m_{i,0}}{m_{i,\infty}})\exp(-t/\tau_i)^{p_i})
%\end{align*}
\[
I(t) =  g (V-E) \left(\prod_j m_{j,\infty}^{p_j} \right) \cdot  \prod_{i = 1}^n \left(1 - \left(1 - \frac{m_{i,0}}{m_{i,\infty}}\right)\exp(-t/\tau_i)\right)^{p_i}
\]
Dividing the current trace by its steady-state value therefore gives us

\[ \hat{I}(t) = \prod_{i = 1}^n \left(1 - \left(1 - \frac{m_{i,0}}{m_{i,\infty}}\right)\exp(-t/\tau_i)\right)^{p_i}\]

\noindent in which the ratios $m_{i,0}/m_{i,\infty}$ are identifiable, along the with $\tau_i$'s. 
\end{proof}

We note that this proof also shows that the initial conditions for the gating variables are identifiable for the scaled model considered in the proof of Theorem \ref{th:unid}, wherein we scaled all $m_i$ by their steady state values (as in this case, the steady state values are precisely the $m_{i,0}/m_{i,\infty}$).

In applying these results to experimental data, we also note that it is reasonable to assume that the initial conditions of the gating variables $m_{i,0}$ are the same for all experimental voltage steps $V$ because of the pre-step fixed holding potential $V_{hold}$. As a result,  the shape of the $m_{i,\infty}(V)$ curve up to scaling by the constant $m_{i,0}$ can be found from the data in this way, and a concrete value of $m_{i,0}$ can be chosen so that $\frac{m_{i,0}}{m_{i,\infty}(V_{hold})} = 1$.  
If the curve isn't smooth at a certain value of $V$, it is possible that $m_{i,0}$ differed from the other trials at that point (e.g. the system may not have fully equilibrated before the next clamp experiment was run).

%\marisa{also want to mention that this same calculation also shows that the initial conditions for the gating variables are identifiable in the scaled model from Theorem 2.1 (as these will be our $m_{i,0}/m_{i,\infty}$)}

\subsection{Applications}

%\subsubsection{Rescaling existing Hodgkin-Huxley models} 
\subsubsection{Rescaling the Sim-Forger model} 
To demonstrate the unidentifiability and identifiable combinations determined in Theorem \ref{th:ICcombos} using an HH-model applied in practice, we consider the Sim-Forger model of a suprachiasmatic nucleus (SCN) neuron \cite{sim2007modeling}. Extensions of this model have used the HH model to gain insight into the underlying mechanisms of timekeeping in the SCN \cite{Belle2009, Diekman2013}. The non-uniqueness of the steady-state parameters when the initial conditions are not known allows us to generate the same output from two different Hodgkin-Huxley style models using two different sets of initial conditions. The sodium current equation in this model is given by: $I_{Na} = g_{Na}(V-E_{Na})m^3 h$. The steady-state parameters $m_{\infty}$ and $h_{\infty}$ are given by the equations
\[m_{\infty} = \frac{1}{1 + \exp(-(V + 35.2)/7.9)}\]
\[h_{\infty} = \frac{1}{1 + \exp((V + 62)/5.5)}\]
with initial conditions $m_0 = 0.34$ and $h_0 = 0. 045$. If we rescale so that $\hat{m}_{\infty} = a m_{\infty}$, $\hat{h}_{\infty} = \frac{ h_{\infty}}{a^3}$, $\hat{m}_0 = 0.34 a$ and $\hat{h}_0 =\frac{ 0.045}{a^3}$, the model will produce identical output for $a \neq 0$. This is shown in Figure \ref{fig:unid}B, where the solid line shows the output for $a=1$ while the open circles shows the same output for $a = 2$. The two traces are identical.

%\subsection{Modeling Considerations}
%\mcolor{Next, we examine several examples and additional cases that may be relevant when applying these results in practice.}

\subsubsection{Fitting the time constants through a least squares approach}
Finally, while the proof of time constant identifiability in Theorem \ref{th:tau} does not immediately appear useful for parameter estimation, we next illustrate how the input-output equations obtained in the proof of Theorem \ref{th:tau} can be used to estimate the identifiable $\tau_i$'s. We demonstrate this using the two-gate HH model, with each gate appearing once: $I = g(V-E)mh$. We generated simulated voltage clamp data, $I(t)$ (with a timestep of 0.1 milliseconds), and the first and second derivatives $(I', I'')$ were estimated numerically from the data in MATLAB (from the slope of the line between first two data points). 

The resulting current trace and its derivatives were concatenated into a $T$-by-$9$ matrix, where $T$ is the number of time points composing $I$ and 9 is the number of distinct monomials in 
$(I-1)$, $I'$, and $I''$, of maximum degree two: 
{\small
\[A = 
\begin{pmatrix}
\vdots & \vdots & \vdots & \vdots & \vdots & \vdots & \vdots & \vdots & \vdots \\
I -1 & I' & I'' & (I -1)^2 & {I'}^2 & {I'}'^2 & (I -1)I' & I'I'' & (I-1) I'' \\ 
\vdots & \vdots & \vdots & \vdots & \vdots & \vdots & \vdots & \vdots & \vdots \\
\end{pmatrix}\]
}

These monomials should form an input-output equation with the appropriate coefficients; hence, the vector $x$ making up the null space of $A$ will give us these coefficients.

To ensure that $A$ \emph{has} a null space, we took its singular value decomposition and made the least singular value equal to zero. This corrected for any errors in the derivative computation and enforced the rank deficiency requirement. We then solved for $x$ from $Ax = 0$. From the theorem proving the identifiability of the time constants, the first three entries of $x$ should be the coefficients of the degree 2 polynomial with roots $-1/\tau_m$ and $-1/\tau_h$. We then found the roots of this 
polynomial using the aptly-named MATLAB function `roots'. 

Indeed, the roots did agree with the prescribed time constants. For preset $\tau_h = 4$ and $\tau_m = 22$, the time constants recovered in this way were $\tau_h =  3.9928$ and $\tau_m =  22.0072$. While noise will likely confound this process in real data, it nonetheless provides an interesting and novel way of fitting HH-style equations. As the other identifiability results predict, the steady-state parameters do not need to be known to estimate the time-constants. In addition, no initial guess of where the time constants lie in parameter space is needed to arrive at this estimate. Thus, even given the issues that may come with estimating the derivatives of $I(t)$ in the presence of noise, this approach may also be a useful way to obtain initial estimates of the $\tau_i$'s which are `in the ballpark', and then more conventional optimization approaches can be used.

\section{Conclusions}

In this work, we have shown that the time constants for the gating variables of a generalized HH-type model are identifiable, while the steady-state parameters are not---unless initial conditions are known. In the case where the initial conditions for the gating variables are not known, we have also shown how the steady state parameters form identifiable combinations, both as a single product with the conductance $g$, and as ratios with their initial conditions. We have further demonstrated that for ionic currents with two types of voltage-dependent gates, the number of each is an identifiable parameter.

Given the common use of parameter estimation to connect HH models with data in the literature \cite{Buhry2011,Lee2006,Csercsik2010,Willms1999,Fink2009,Willms2002,Hafner1981,Vavoulis2012}, these results may be directly useful for 
%Given the common use of parameter estimation to connect HH models with data in the literature \cite{many things!}, these results may be directly useful for 
%these are relevant facts to keep in mind while fitting and using electrophysiological models, p
fitting and using electrophysiological models. In particular, the points about initial conditions may be useful in practice, as matching both the initial condition and the steady-state parameter at once is an underdetermined problem. Previous work has noted the range of difficulties with estimation for HH-type models \cite{Buhry2011,Lee2006, Willms1999, Csercsik2010}, and this analysis may help to both explain and improve on some of these issues by explicitly laying out the identifiability properties of general HH-type models. 

In \cite{Csercsik2010}, the authors used differential polynomial reduction to show that the steady-state parameters are not identifiable. Rather than attempting that computation in the generalized case, we simply rescaled the equations to conclude that the parameters are not identifiable from the altered identifiability structure (one parameter fewer) of the equations. Rescaling in this way is a quick and easy way to begin to approach identifiability questions in the wild, and we hope it proves useful to those who are less comfortable with differential algebra. 

To show that the time constants were identifiable in the fully general case, we used a linear system that emerges from the structure of the HH equations and their derivatives. By solving the linear system and equating one entry in the resulting vector with the products of others, we generated an input-output equation. The coefficients of this input-output equation were the coefficients of the polynomial with roots at the time constants and sums of time constants. We were further able to demonstrate the identifiability of the time constants by using the method described in our proof to compute two time constants from simulated data.

In this way, our proof for the identifiability of the time constants also led to the development of a novel approach to estimation of the gating variable time constants, which does not require knowledge or estimation of the steady state parameters. The matrix form of the input-output equations allowed us to estimate parameters by considering the nullspace, which we illustrated using simulated data from the two-gate HH model case. By contrast, more standard estimation approaches would need to estimate the (potentially unidentifiable) steady state parameters and gating variable initial conditions in order to estimate the time constant parameters---this can be somewhat ameliorated by re-scaling the model by the steady state constants (e.g. as in the proof of Theorem \ref{th:unid}), or by knowledge of the initial conditions, but still requires additional information. This approach enabled us to arrive at a good approximation of the original parameters without needing to know or guess any other unknown parameters, including initial conditions. While we recognize that the presence of noise would confound this process, the idea behind it could prove useful in later work, perhaps in suggesting a starting point in parameter space for error-minimizing parameter searches.

This analysis has focused only on the identifiability of the Hodgkin-Huxley model from data obtained through voltage clamp; a natural extension for future work is to consider data taken from current clamp experiments, in which a current is applied and changes in voltage are recorded, and action potential clamp experiments, which are similar to voltage clamp except that instead of a \emph{constant} voltage being maintained via a feedback loop, the voltage is instead fixed to match an action potential.
Finally, the extensions of Hodgkin-Huxley are wide and varied and encompass much more than voltage-dependent gates acting independently. There is a broad literature of ion channel models out there that could likely benefit from inspection similar to this.

\section*{Acknowledgments} 
We would like to thank Danny Forger for his discussions with us about this work. This material is based upon work supported by the National Science Foundation Graduate Student Research Fellowship under Grant No. DGE 1256260.

\bibliography{idbib}
\bibliographystyle{ieeetr}

%\newpage

\end{document}